\newtheorem{Proposition}{Proposition}
\newtheorem{Lemma}{Lemma}
\newtheorem{Remark}{Remark}
\begin{document}

\title{{  NOMA: An Information Theoretic Perspective  }}
\author{Peng Xu,  Zhiguo Ding, \IEEEmembership{Member, IEEE}, Xuchu Dai and H. Vincent Poor,
 \IEEEmembership{Fellow, IEEE}
\thanks{P. Xu and X. Dai are with Dept. of Electronic Engineering and Information Science, University of Science and Technology of China,  Hefei, Anhui, China.
Z. Ding and H. V. Poor are with the Department of Electrical Engineering, Princeton
University, Princeton, NJ 08544, USA. Z.Ding is also with the School of Computing
and Communications, Lancaster University,   U.K.
}\vspace{-1em}}\maketitle
\begin{abstract}
  In this letter, the performance of non-orthogonal multiple access (NOMA)
   is investigated from an information theoretic
  perspective. {The relationships among the capacity region of  broadcast channels and two rate regions achieved by
  NOMA and time-division multiple access (TDMA) are illustrated first}.
  Then, the performance of NOMA is evaluated  by
  considering  TDMA   as the benchmark, where both the sum rate and the individual user rates are used as the criteria.
   In a wireless downlink scenario with user pairing, the developed analytical results show that
    NOMA  can outperform TDMA not only for the sum rate but also for each user's
     individual rate, particularly when the difference between the users' channels  is large.
  \end{abstract}

\vspace{-1em}
\section{Introduction}
Because of its superior spectral efficiency,  non-orthogonal multiple access (NOMA) has been recognized
as a promising technique to be used in the fifth generation (5G) networks \cite{saito2013system,li20145g,ding2014performance,timotheou2015fairness}.
 NOMA  utilizes
the power domain for achieving multiple access, i.e., different users are served at different power levels. Unlike conventional orthogonal MA, such as time-division multiple access (TDMA),
 NOMA  faces strong   co-channel interference between different users, and successive interference cancellation (SIC) is used by the NOMA users with better channel conditions for interference management.

The concept of NOMA is essentially
 a special case of superposition coding developed for broadcast channels (BC).
Cover first found  the capacity region of a degraded  discrete memoryless  BC by using   superposition coding \cite{cover1972broadcast}.
Then, the capacity region of the Gaussian BC  with single-antenna terminals  was established in  \cite{bergmans1974simple}.
Moreover, the capacity region of the multiple-input multiple-output (MIMO) Gaussian BC was found in \cite{weingarten2006capacity},
by applying dirty paper coding (DPC) instead of  superposition coding.
This paper mainly focuses on the single-antenna scenario.

Specifically,
consider a Gaussian BC with a single-antenna  transmitter and two single-antenna
 receivers, where each receiver is corrupted by additive  Gaussian
 noise with unit variance.
Denote the ordered  channel gains from the transmitter  to the two receivers by $h_w$ and $h_b$, i.e.,
 $|h_w|^2<|h_b|^2$. For a given channel pair $(h_w,h_b)$, the capacity region is given by \cite{bergmans1974simple}
\begin{align}
\mathcal{C}^{\textrm{BC}}\triangleq &\bigcup_{\begin{footnotesize}\begin{array}{l}a_1\!+\!a_2\!=\!1,a_1,a_2\geq0
\end{array}\end{footnotesize}} \Bigg\{(R_1,R_2):    R_1,R_2\geq 0,\nonumber\\
 &R_1\!\leq \!\log_2\left(1\!+\!\frac{a_1x}{1\!+\!a_2x}\right), R_2\!\leq\!\log_2\left({1\!+\!a_2y}\right)\Bigg\}\label{rate_BC},
\end{align}
where $a_i$ denotes the power allocation coefficient,   $x=|h_w|^2\rho$, $y=|h_b|^2\rho$, and $\rho$ denotes the transmit signal-noise-ratio (SNR). Based on SIC, the rate region achieved by NOMA, denoted by $\mathcal{R}^{\textrm{N}}$, can be expressed   the same as $\mathcal{C}^{\textrm{BC}}$ in \eqref{rate_BC}, but
with an additional  constraint $a_1\geq a_2$  in order to guarantee the quality of service at the user with
the poorer  channel condition.

In addition, the TDMA region   is given by
\begin{align}
\mathcal{R}^{\textrm{T}}\triangleq  \Bigg\{(R_1,R_2): & R_1,R_2\geq 0,\frac{R_1}{R_1^*}+\frac{R_2}{R_2^*}\leq 1\Bigg\},
\end{align}
where $R_1^*=\log_2 (1+x)$ and $R_2^*=\log_2 (1+y)$.

The three regions are illustrated  in Fig. \ref{regions_comparison}.
In the rest of this letter, we are interested in the two region boundaries, i.e., the curve A-F and the segment A-E,
which represent the optimal  rate pairs achieved by NOMA and TDMA, respectively.
The relationship between  the rate pairs  on the   two boundaries will be further
 interpreted based on plane geometry as shown in Section \ref{comparison}.
  Then, based on their relationship, the performance of NOMA is
    characterized  in terms of both the sum rate and individual user rates, by considering the conventional TDMA scheme
  as the benchmark.   In a wireless downlink scenario  with user pairing,
  analytical results are developed to demonstrate that NOMA can outperform TDMA when
  there exists a significant difference between the channel conditions  of the scheduled   users.

\begin{figure}[tbp]\centering
    \epsfig{file=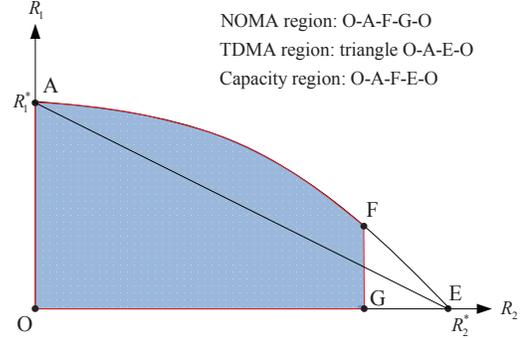, width=0.4\textwidth,clip=}
\caption{The capacity region, NOMA  and TDMA regions for a given channel pair $(h_w,h_b)$,
where the
point F is located at $\left(\log_2(1\!+\!\frac{y}{2}),\log_2\left(1\!+\!\frac{x}{2\!+\!x}\right)\right)$.}\label{regions_comparison}
\end{figure}

\section{Preliminary}
Two propositions are provided in this section, which will be used in the next section. Specifically,
define $f^N(\cdot)$  and $f^T(\cdot)$ as the following two functions:
\begin{align}
 f^N(z)&= \log_2\left(\frac{(1+x)y}{y+(2^{z}-1)x}\right),\ 0\leq z \leq R_2^*,\label{fNz}\\
f^T(z)&=\left(1-\frac{z}{R_2^*}\right){R_1^*},\ 0\leq z \leq R_2^*.\label{fTz}
\end{align}

For a given $z_{0}\in(0,R_2^*)$,  two propositions are provided as follows.
\begin{Proposition}
  If $z>z_0$, then $f^N(z)+z>f^T(z_0)+z_0$.
\end{Proposition}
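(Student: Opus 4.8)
The plan is to recognize that $g^N(z) := f^N(z)+z$ and $g^T(z) := f^T(z)+z$ are precisely the NOMA and TDMA \emph{sum rates}, expressed as functions of the better-channel user's rate $z=R_2$ along the two boundaries in Fig.~\ref{regions_comparison}. The claim $f^N(z)+z > f^T(z_0)+z_0$ then reads $g^N(z) > g^T(z_0)$, and I would establish it by chaining two facts: (i) $g^N$ is strictly increasing, so $g^N(z) > g^N(z_0)$ whenever $z>z_0$; and (ii) $g^N$ dominates $g^T$ pointwise, i.e. $g^N(z_0) \geq g^T(z_0)$. Combining these gives $g^N(z) > g^N(z_0) \geq g^T(z_0)$, which is the desired inequality.

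For (i), I would differentiate directly. Writing $f^N(z)=\log_2\left((1+x)y\right)-\log_2\left(y+(2^z-1)x\right)$, a short calculation yields
\begin{align}
\frac{d g^N}{dz} = 1 - \frac{2^z x}{y+(2^z-1)x} = \frac{y-x}{y+(2^z-1)x}.
\end{align}
Since the users are ordered so that $|h_w|^2<|h_b|^2$, we have $y>x$, and the denominator $y+(2^z-1)x$ is positive for every $z\in[0,R_2^*]$ (because $2^z\geq 1$). Hence $\frac{dg^N}{dz}>0$ and $g^N$ is strictly increasing on $[0,R_2^*]$.

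For (ii), the same derivative delivers concavity at no extra cost: its denominator $y+(2^z-1)x$ is increasing in $z$, so $\frac{dg^N}{dz}$ is decreasing and $g^N$ is strictly concave. Evaluating the endpoints gives $g^N(0)=R_1^*$ and $g^N(R_2^*)=R_2^*$, while $g^T$ is affine with $g^T(0)=R_1^*$ and $g^T(R_2^*)=R_2^*$. Thus $g^T$ is exactly the chord of $g^N$ over $[0,R_2^*]$, and a concave function lies on or above its chord, giving $g^N(z_0)\geq g^T(z_0)$ for every $z_0\in[0,R_2^*]$, which is precisely (ii).

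The only real subtlety lies in step (ii): it must be stated as a property of the entire interval $[0,R_2^*]$ rather than of the single point $z_0$, and it hinges on the observation that $g^T$ is not an arbitrary line but the specific chord joining the two endpoints of $g^N$. Once concavity and the matching endpoint values are in hand, the chord-domination inequality is immediate, and the strict monotonicity of $g^N$ then upgrades the domination at $z_0$ to a strict inequality at every larger $z$, completing the argument.
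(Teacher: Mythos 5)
Your proof is correct and follows essentially the same route as the paper's: the same derivative computation showing $f^N(z)+z$ is strictly increasing, followed by a concavity/chord argument to get $f^N(z_0)+z_0\geq f^T(z_0)+z_0$. The only cosmetic difference is that you phrase the chord-domination step in terms of $g^N=f^N+z$ versus $g^T=f^T+z$ (and explicitly verify the matching endpoints $R_1^*$ and $R_2^*$), whereas the paper applies Jensen's inequality to $f^N$ directly; since the two functions differ by the same linear term, these are identical arguments.
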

\begin{proof}
  We can first obtain $\frac{d\; f^N}{d\; z}=\frac{-x2^{z}}{y-x+x2^{z}}$.
  Define $f_{sum}^N=f^N+z$, so $\frac{d\;f_{sum}^N}{d\; z}=
  \frac{y-x}{y-x+x2^{z}}>0$, which means that $f_{sum}^N(z)$ is a monotonic increasing function of $z$.
  Thus,  $f_{sum}^N(z)>f_{sum}^N(z_0)=f^N(z_0)+z_0$.

  On the other hand,  $f^N(z)$ is a  concave  function of $z$ (i.e., $\frac{d^2\; f^N}{d\; z^2}<0$)
   when $z\in(0,R_2^*)$. Hence
   $$\lambda f^N(0)+(1-\lambda)f^N(R_2^*)\leq f^N(\lambda\times 0+(1-\lambda)R_2^*),$$
 for $\forall \lambda\in(0,1)$. Since $f^N(0)=R_1^*$ and $f^N(R_2^*)=0$, we can obtain
   $f^N(z_0)\geq f^T  (z_0)$ by setting $\lambda=1-z_0/R_2^*$.
 This proposition has been proved.
\end{proof}
\begin{Proposition}
  If $z<z_0$, then $f^N(z)>f^T(z_0)$.
\end{Proposition}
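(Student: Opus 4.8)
The plan is to prove Proposition 2 by reusing the two structural facts already established in the proof of Proposition 1: the strict monotonicity of $f^N$ and the chord inequality coming from its concavity. Concretely, I would chain the inequalities $f^N(z) > f^N(z_0) \ge f^T(z_0)$, where the first step uses that $f^N$ is strictly decreasing together with $z < z_0$, and the second step is exactly the concavity bound $f^N(z_0) \ge f^T(z_0)$ derived in Proposition 1. No new computation beyond what the previous proof already supplies is needed.

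First I would record that the derivative $\frac{d f^N}{d z} = \frac{-x 2^z}{y - x + x 2^z}$, already computed in the previous proof, is strictly negative on $(0, R_2^*)$. This sign is guaranteed by the channel ordering $|h_w|^2 < |h_b|^2$, which gives $y > x$ and hence a positive denominator $y - x + x 2^z > 0$ together with a negative numerator $-x 2^z < 0$. Consequently $f^N$ is strictly decreasing, so $z < z_0$ immediately yields $f^N(z) > f^N(z_0)$.

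Next I would invoke the concavity argument from Proposition 1 essentially verbatim: since $f^T$ is precisely the chord joining the endpoints $f^N(0) = R_1^*$ and $f^N(R_2^*) = 0$, concavity of $f^N$ places its graph above this chord, giving $f^N(z_0) \ge f^T(z_0)$ at the point $z_0$ (obtained, as before, by setting $\lambda = 1 - z_0/R_2^*$). Combining the two steps produces $f^N(z) > f^N(z_0) \ge f^T(z_0)$, which is the claim.

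There is no substantial obstacle here; the only point requiring care is the bookkeeping on \emph{strictness}. The strict inequality in the conclusion is inherited entirely from the strict monotonicity step $f^N(z) > f^N(z_0)$, while the concavity step need only supply the non-strict bound $f^N(z_0) \ge f^T(z_0)$. I would therefore make sure the channel ordering $y > x$ is stated explicitly, so that the derivative is strictly, rather than merely weakly, negative on the open interval and the final inequality is genuinely strict.
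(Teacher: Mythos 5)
Your proposal is correct and follows essentially the same route as the paper: strict monotonic decrease of $f^N$ gives $f^N(z)>f^N(z_0)$, and the concavity/chord bound from Proposition 1 gives $f^N(z_0)\geq f^T(z_0)$. Your extra care in noting that only the monotonicity step needs to be strict (and that $y>x$ guarantees this) is a minor refinement of the paper's argument, not a different approach.
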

\begin{proof}
  Since  $f^N(z)$ is monotonically  decreasing in $z$, $f^N(z)>
  f^N(z_0)$ for $z<z_0$. Furthermore, due to the fact that $f^N(z)$ is a concave function of $z$,   $f^N(z_0)>f^T (z_0)$ as discussed above.
This proposition has been proved.
\end{proof}

\section{Performance Analysis}
In this section, the performance of NOMA will be studied  by considering the achievable rates
of TDMA as a benchmark.
\vspace{-1em}
\subsection{Comparison to TDMA}\label{comparison}
Here, the individual rates and the sum rate achieved by NOMA will be compared with those of TDMA
using plane geometry.

As shown in Fig. \ref{regions}, for a given channel pair $(h_w,h_b)$ and $|h_w|<|h_b|$,  suppose  the point N is located at
 \begin{align}\label{R12N}
(R_2^N,R_1^N)=\left(\log_2(1+a_2y),\log_2\left(1+\frac{a_1x}{1+a_2x}\right)\right),\end{align}
where $a_1\!+\!a_2\!=1$, $0\!\leq a_2\!\leq a_1$;  and the  point T is located at
\begin{align}(R_2^T,R_1^T)=\left(b_2R_2^*,b_1R_1^*\right),\end{align}
 where $b_1\!+\!b_2=\!1$, $b_1,b_2\geq 0$.
This means that the points N and T lie on the curve A-F (NOMA rate pair) and the segment A-E
(TDMA rate pair), respectively.
 In addition, consider three important  lines: $R_1=\!R_1^N$, $R_2=R_2^N$ and $R_1\!+R_2=\!R_1^N\!+\!R_2^N$,
 which represent the two NOMA users' individual rates and their sum rate, respectively. It is easy to prove that
 $R_1^N+R_2^N<R_2^*$, and these three lines will
  divide the line segment A-E into four subsegments with intersection points B, C and D.

 When considering  $h_w$ and $h_b$ to be random variables and fixing $a_i$,  we can define four  random events
 according to the location of point T and  these  four subsegments as follows.
 \begin{align}
   \varepsilon_1&\triangleq\left\{\textrm{Point T lies on
   subsegment A-B}\right\},\\
    \varepsilon_2&\triangleq\left\{\textrm{Point T lies on
   subsegment B-C}\right\},\\
      \varepsilon_3&\triangleq\left\{\textrm{Point T lies on
   subsegment C-D}\right\},\\
      \varepsilon_4&\triangleq\left\{\textrm{Point T lies on
   subsegment D-E}\right\} .
 \end{align}
 These events comprehensively reflect the relationship between the rates (including the individual rates and the
 sum rate) of NOMA and TDMA, i.e.,
\begin{align}
   \varepsilon_1&\! =\!  \left\{R_1^N<R_1^T, R_2^N>R_2^T,  R_1^N\! +\! R_2^N>R_1^T\!  +\! R_2^T\right\},\\
    \varepsilon_2&\! =\!  \left\{R_1^N>R_1^T, R_2^N>R_2^T,  R_1^N\! +\! R_2^N>R_1^T \! +\! R_2^T\right\},
    \end{align}
   \begin{align}
      \varepsilon_3&\! =\!  \left\{R_1^N>R_1^T, R_2^N<R_2^T,  R_1^N\! +\! R_2^N>R_1^T \! +\! R_2^T\right\},\\
       \varepsilon_4&\!=\! \left\{R_1^N>R_1^T, R_2^N<R_2^T,  R_1^N\!+\!R_2^N<R_1^T \!+\!R_2^T\right\}.
 \end{align}
 \begin{figure}[tbp]\centering
  \epsfig{file=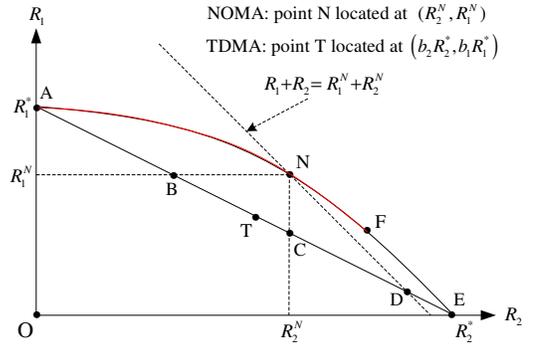, width=0.415\textwidth,clip=}
\caption{Comparison of the rate pairs achieved  by NOMA and TDMA  schemes for a given channel pair $(h_w,h_b)$,
where $(R_2^N,R_1^N)$ is defined in \eqref{R12N}.
 In this example, $T$ lies on the line segment B-C.}\label{regions}
\end{figure}
 Notice that $(R_1^N,R_2^N,R_1^T,R_2^T)$ satisfies the relationship $R_1^N=f^N(R_2^N)$ and
 $R_1^T=f^T(R_2^T)$ as shown in \eqref{fNz} and \eqref{fTz}. Hence,
 based on Propositions 1 and 2  by replacing $(f^N,z,f^T,z_0)$ with
 $(R_1^N,R_2^N,R_1^T,R_2^T)$,
 we can remove some redundant
 conditions for each event, i.e.,
 \begin{align}
   \varepsilon_1 &\stackrel{(a)}{=} \left\{R_1^N<R_1^T, R_2^N>R_2^T\right\},\\
    \varepsilon_2&\stackrel{(b)}{=} \left\{R_1^N>R_1^T, R_2^N>R_2^T\right\},\label{event2}\\
      \varepsilon_3&\stackrel{(c)}{=} \left\{ R_2^N<R_2^T,  R_1^N+R_2^N>R_1^T +R_2^T\right\},\\
      \varepsilon_4&\stackrel{(d)}{=} \left\{R_1^N+R_2^N<R_1^T +R_2^T\right\},
 \end{align}
 where $(a)$ and $(b)$ are based on Proposition 1; $(c)$ is based on Proposition 2; and
 $(d)$ is based on the converse-negative proposition of Proposition 1 (i.e.,
 $\{R_1^N\!+\!R_2^N<R_1^T \!+\!R_2^T\}\Rightarrow \{R_2^N<R_2^T\}$) and  Proposition 2.
\begin{Remark}
  Among these four events, of particular interest is $\varepsilon_2$ which represents the
  situation in which  NOMA outperforms TDMA
  in terms of not only the sum rate but also each individual rate.
\end{Remark}

 In the next subsection,  the probability of each event
 will be calculated, which characterizes the    performance of NOMA in comparison with TDMA.

\vspace{-0.8em}
\subsection{Probability Analysis}\label{Probability_ana}
Let the two users in the considered BC be selected from $M$ mobile users in a downlink
 communication scenario, as motivated  in \cite{ding2014impact}.
 Without loss of generality, assume that all the users' channels are ordered as
 $|h_1|^2\leq \cdots \leq |h_M|^2$, where $h_m$ is the Rayleigh fading channel gain from the base station
 to the $m$-th user. Considered that the $m$-th user is paired with the $n$-th user to perform NOMA.
 Hence $h_w=h_m$, $h_b=h_n$, and $x$ and $y$ can be rewritten as
  $x=\rho |h_m|^2$, $y=\rho |h_m|^2$, with joint probability density function (PDF)  as follows
   \cite{david2003order}:
\begin{align}
  f_{X,Y}(x,y)=&w_1f(x)f(y)[F(x)]^{m-1}[1-F(y)]^{M-n}\nonumber\\
  &\times [F(y)-F(x)]^{n-1-m}, \ 0<x<y,
\end{align}
where $f(x)=\frac{1}{\rho}e^{-\frac{x}{\rho}}$, $F(x)=1-e^{-\frac{x}{\rho}}$, and
$w_1=\frac{M!}{(m-1)!(n-1-m)!(M-n)!}$.
A fixed power allocation strategy   $(a_1,a_2)$ is considered in this NOMA system  for the sake of simplicity.
Dynamically changing  $(a_1,a_2)$
according to the random channel state information  (CSI) could achieve a larger ergodic rate region \cite{li2001capacity}, but at the expense of higher complexity.

Using the PDF of $x$ and $y$, the probability of  each event defined in the previous subsection can be calculated
in order to evaluate the performance of NOMA.
 The probability of the event $\varepsilon_2$
 is first given in the following lemma, where we set $b_2=1/2$ (each user is allocated an equal-length time slot,
 which is also called ``naive TDMA'' in \cite{cover2006elements}) for simplicity.
 \begin{Lemma}
 Given $(M,m,n,\rho,a_2)$ and $b_2=1/2$, the probability  that  NOMA achieves
 larger  individual rates than conventional TDMA  for both user $m$ and user $n$ is given by
 \begin{align}
  P(\varepsilon_2)\!=&w_1\sum_{k=0}^{m-1}(-1)^{m-\!1-k}C_{m-1}^k
           \Bigg[\sum_{i=0}^{n-1}\frac{(-1)^{n-1-i}C_{n-\!1}^id^{M\!-i}}{M-i}\nonumber\\
           &- \sum_{i=0}^{k}\sum_{j=0}^{n-1-k}\frac{(-1)^{n-1-i-j}C_{k}^iC_{n-1-k}^jd^{M-i}}{M-i-j}\Bigg].
           \end{align}
   \end{Lemma}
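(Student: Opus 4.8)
The plan is to convert the rate event $\varepsilon_2$ into an elementary region in the $(x,y)$-plane and then integrate the joint order-statistic density over it. Starting from the reduced form \eqref{event2}, $\varepsilon_2=\{R_1^N>R_1^T,\,R_2^N>R_2^T\}$, I would substitute $b_1=b_2=\tfrac12$ together with \eqref{R12N} and simplify each inequality. For the strong user, $\log_2(1+a_2y)>\tfrac12\log_2(1+y)$ is equivalent to $(1+a_2y)^2>1+y$, i.e. $a_2^2y>1-2a_2$; for the weak user, the identity $1+\frac{a_1x}{1+a_2x}=\frac{1+x}{1+a_2x}$ (valid since $a_1+a_2=1$) reduces $R_1^N>R_1^T$ to $a_2^2x<1-2a_2$. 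Because $a_2\le a_1$ forces $a_2\le\tfrac12$ and hence $1-2a_2\ge0$, both inequalities live around the single threshold $\tau\triangleq\frac{1-2a_2}{a_2^2}$, so that $\varepsilon_2=\{x<\tau<y\}$, which is automatically compatible with the support $x<y$.

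With the region in hand I would write $P(\varepsilon_2)=\int_0^\tau\!\int_\tau^\infty f_{X,Y}(x,y)\,dy\,dx$ and change variables through $p=e^{-x/\rho}$ and $q=e^{-y/\rho}$ (equivalently $F(x)=1-p$, $F(y)=1-q$). Writing $d\triangleq e^{-\tau/\rho}$ so that $F(\tau)=1-d$, the region becomes $p\in(d,1)$, $q\in(0,d)$, the marginal densities cancel against the Jacobian, and the integrand collapses to $(1-p)^{m-1}q^{M-n}(p-q)^{n-1-m}$. The object to be integrated is therefore $w_1\int_d^1\!\int_0^d(1-p)^{m-1}q^{M-n}(p-q)^{n-1-m}\,dq\,dp$, and expanding $(1-p)^{m-1}$ by the binomial theorem produces the outer prefactor $\sum_{k=0}^{m-1}(-1)^{m-1-k}C_{m-1}^k$ seen in the statement.

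The main obstacle is the coupling factor $(p-q)^{n-1-m}$ (that is, $[F(y)-F(x)]^{n-1-m}$ in the original variables), which blocks the double integral from factoring into a product of one-dimensional integrals. My intended device is to complete the inner integral to a full Beta integral and subtract the tail: since $\int_q^1(1-p)^{m-1}(p-q)^{n-1-m}dp=(1-q)^{n-1}B(m,n-m)$ decouples cleanly, I would split $\int_d^1=\int_q^1-\int_q^d$. The completed part reassembles the marginal survival probability $P(Y>\tau)$, whose integral is $\int_0^d q^{M-n}(1-q)^{n-1}dq$ and expands, after restoring the order-statistic constant, into $\sum_{i=0}^{n-1}\frac{(-1)^{n-1-i}C_{n-1}^i d^{M-i}}{M-i}$; the subtracted tail, after swapping the order of integration and applying $\int_0^p q^{M-n}(p-q)^{n-1-m}dq=p^{M-m}B(M-n+1,n-m)$, reassembles $P(X>\tau)$. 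Collecting the combinatorial prefactors is then the only bookkeeping left, and it is where care with the denominators $M-i$ and $M-i-j$ appearing in the statement is required.

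As an independent check I would verify the result combinatorially. The event $\{x<\tau<y\}$ says that, among the $M$ i.i.d.\ values $\rho|h_1|^2,\dots,\rho|h_M|^2$, the number $N_\tau$ falling below $\tau$ satisfies $m\le N_\tau\le n-1$; since $N_\tau\sim\mathrm{Binomial}(M,1-d)$, this gives the compact equivalent $P(\varepsilon_2)=\sum_{r=m}^{n-1}C_M^r(1-d)^r d^{M-r}=P(Y>\tau)-P(X>\tau)$, against which the expanded closed form must agree.
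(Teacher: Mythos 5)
Your reduction of $\varepsilon_2$ to the event $\{x<\tau<y\}$ with $\tau=\frac{1-2a_2}{a_2^2}$ is exactly the paper's step $(a)$ (the paper calls your $\tau$ its $w_2$), and you actually supply the algebra, including the identity $1+\frac{a_1x}{1+a_2x}=\frac{1+x}{1+a_2x}$, that the paper leaves implicit. From there the two arguments diverge. The paper substitutes $t=e^{-x/\rho}$ and then $u=t-e^{-y/\rho}$, expands $(1-e^{-y/\rho}-u)^{m-1}$ binomially, and integrates term by term, which is what generates the outer sum over $k$ and the two inner integrals $Q_1=\int_0^d v^{M-n}(1-v)^{n-1}dv$ and $Q_{2,k}=\int_0^d v^{M-n}(1-v)^k(d-v)^{n-1-k}dv$ appearing in the stated formula. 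You instead complete the inner $p$-integral to a Beta integral and subtract the tail, which decouples the double integral into $w_1B(m,n-m)\int_0^d q^{M-n}(1-q)^{n-1}dq - w_1B(M-n+1,n-m)\int_0^d p^{M-m}(1-p)^{m-1}dp = P(Y>\tau)-P(X>\tau)$; both Beta-function bookkeeping steps check out ($w_1B(m,n-m)=\frac{M!}{(n-1)!(M-n)!}$ and $w_1B(M-n+1,n-m)=\frac{M!}{(m-1)!(M-m)!}$ are precisely the order-statistic marginal constants). Your route is genuinely different and buys a much cleaner answer: together with your binomial-count identity $P(\varepsilon_2)=\sum_{r=m}^{n-1}C_M^r(1-d)^rd^{M-r}$ (which is immediate since $Y\geq X$ almost surely, so $P(x<\tau<y)=P(Y>\tau)-P(X\geq\tau)$), it makes the paper's special case $m=1,n=M$, namely $1-(1-d)^M-d^M$, transparent rather than requiring the separate simplification the paper carries out.

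Two caveats. First, your closing remark that ``collecting the combinatorial prefactors is then the only bookkeeping left'' understates the remaining work: your decomposition does not naturally produce the particular triple sum over $(k,i,j)$ displayed in the Lemma, and matching the two closed forms is a nontrivial algebraic identity, so a complete write-up should either derive the displayed form via the paper's expansion order or explicitly prove the equivalence. Second, your independent check is more valuable than you may realize: testing $M=3$, $m=1$, $n=3$ gives $\sum_{r=1}^{2}C_3^r(1-d)^rd^{3-r}=3d-3d^2$, whereas the Lemma's displayed expression evaluates to $6d-6d^2$; the displayed statement has dropped the factor $\frac{1}{n-1-k}$ that is present in the paper's own intermediate equation for $P(\varepsilon_2)$, so your compact form exposes a typo in the Lemma as printed.
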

\begin{proof}
From \eqref{event2}, $P(\varepsilon_2)$ can be calculated as follows:
 \begin{align}\label{P2}
  P(&\varepsilon_2)=P(R_1^N>R_1^*/2, R_2^N>R_2^*/2)\nonumber\\
  &\stackrel{(a)}{=}P\left(x<w_2,y>w_2\right)\nonumber\\
  &=w_1\int_{w_2}^{+\infty}f(y)[1-F(y)]^{M-n}\nonumber\\
  &\hspace{1em}\times\left(\int_{0}^{w_2}f(x)[F(x)]^{m-1}[F(y)-F(x)]^{n-1-m} dx\right)dy\nonumber\\
    &\stackrel{(b)}{=}w_1\int_{w_2}^{+\infty}f(y)[1-F(y)]^{M-n}
    \nonumber\\
  &\hspace{1em}\times\left(\int_{d}^{1}(1-t)^{m-1}(t-e^{-\frac{y}{\rho}})^{n-1-m}
     dt\right)dy\nonumber\\
        &{=}w_1\int_{w_2}^{+\infty}f(y)[1-F(y)]^{M-n}\nonumber\\
  &\hspace{1em}\times\left( \int_{d-e^{-\frac{y}{\rho}}}^{1-e^{-\frac{y}{\rho}}}
       (1-e^{-\frac{y}{\rho}}-u)^{m-1} u^{n-1-m}     du\right)dy\nonumber\\
         &\stackrel{(c)}{=}w_1\int_{w_2}^{+\infty}f(y)[1-F(y)]^{M-n}
         \sum_{k=0}^{m-1}(-1)^{m-1-k}C_{m-1}^k\nonumber\\
  &\hspace{1em}\times
         \frac{(1-e^{-\frac{y}{\rho}})^{n-1}-(1-e^{-\frac{y}{\rho}})^{k}
         (d-e^{-\frac{y}{\rho}})^{n-1-k}}{n-1-k}dy\nonumber\\
           &{=}w_1\sum_{k=0}^{m-1} \frac{(-1)^{m-1-k}C_{m-1}^k}{n-1-k}
           \Bigg[\underbrace{\int_{0}^{d}v^{M-n}(1-v)^{n-1}dv}
           _{Q_1}\nonumber\\
  &\hspace{3em}-\underbrace{\int_{0}^{d}v^{M-n}(1-v)^{k}(d-v)^{n-1-k}dv}
           _{Q_{2,k}}\Bigg]
\end{align}
where $(a)$ follows the definition  $w_2=\frac{1-2a_2}{a_2^2}$; $(b)$ follows $d=e^{-\frac{w_2}{\rho}}$; and
$(c)$ follows $C_p^q=\frac{p!}{q!(p-q)!}$, $p>q$. 
Furthermore, the two terms $Q_1$ and $Q_{2,k}$ can be calculated as follows:
\begin{align}
  Q_1&=\sum_{i=0}^{n-1}(-1)^{n-1-i}C_{n-1}^i \int_{0}^dv^{M-1-i}dv\nonumber\\
 & = \sum_{i=0}^{n-1}\frac{(-1)^{n-1-i}C_{n-1}^id^{M-i}}{M-i}, \label{Q1}\\
  Q_{2,k}&\!=\!\sum_{i=0}^{k}\sum_{j=0}^{n-1-k}\!(-1)^{n\!-1-i-\!j}C_{k}^iC_{n\!-1-k}^j d^j\int_{0}^dv^{M\!-1-i-\!j}dv\nonumber\\
  &= \sum_{i=0}^{k}\sum_{j=0}^{n-1-k}\frac{(-1)^{n-1-i-j}C_{k}^iC_{n-1-k}^jd^{M-i}}{M-i-j}.\label{Q2k}
\end{align}
Substituting the above two relationships into \eqref{P2}, this lemma has been proved.
\end{proof}

 Moreover, for the first event, it is not difficult to obtain that
    \begin{align}\label{P1}
   &P(\varepsilon_1)= P(R_2^N>R_2^T)-P(\varepsilon_2)\nonumber\\
   &=1\!-\!w_3\sum_{i=0}^{n-1}
   \frac{(-1)^iC_{n-1}^i}{M\!-n+i\!+1}\left(1\!-d^{M\!-n+i+\!1}\right)\!-\!P(\varepsilon_2),
  \end{align}
  where $w_3=\frac{M!}{(m-1)!(M-m)!}$. For the fourth event,
  from \cite{ding2014impact} (Theorem 1), we have
  \begin{align}\label{P4}
   & P(\varepsilon_4)=P(R_1^N+R_2^N<R_1^T+R_2^T)=\nonumber\\
   &1-\!w_1\!\sum_{i=0}^{n-1-m}\frac{(-1)^iC_{n-1-m}^i}{m+i}\int_{\sqrt{w_2\!+1}\!-1}^{w_2}f(y)[F(y)]^{n\!-1-m-i}\nonumber\\
   &\ \times[1-F(y)]^{M-n}\left([F(y)]^{m+i}-\left[F\left(\frac{w_2-y}{1+y}\right)\right]^{m+i}\right)dy                         \nonumber\\
   &\qquad-{w_3}\sum_{j=0}^{n-1}\frac{(-1)^jC_{n-1}^j}{M-n+j+1}d^{-{(M-n+j+1)}}.
  \end{align}
  Thus, the probability of the third event can be written as
  \begin{align}\label{P3}
    P(\varepsilon_3)=1-P(\varepsilon_1)-P(\varepsilon_2)-P(\varepsilon_4).
  \end{align}

  Now, $P(\varepsilon_i)$, $i=2,1,4,3$, have been obtained as in Eqs. \eqref{P2}, \eqref{P1}, \eqref{P4},
  and \eqref{P3}, respectively.

  {\bf Special Case:}  The expression for $P(\varepsilon_2)$ in Lemma 1 can be simplified when
   considering a special pairing  case, i.e., $m=1$, $n=M$. In this case, $k=0$, and $Q_1$ and $Q_{2,0}$
   in \eqref{Q1} and \eqref{Q2k}, respectively, can be derived as
  \begin{align}
   & Q_1= -\frac{1}{M}\sum_{i=0}^{M-1}{(-1)^{M-i}C_{M}^id^{M-i}}\nonumber\\
   & =\!-\frac{1}{M}\left[\sum_{i=0}^{M}{C_{M}^i(-d)^{M-i}}-1\right]\!=\!\frac{1-\!(1-\!d)^M}{M},
  \end{align}
  \begin{align}
  Q_{2,0}&\!=\!  \sum_{j=0}^{M-1}\frac{(-1)^{M\!-1-\!j}C_{M\!-1}^jd^{M}}{M-j}
  \!=\! -\frac{d^M}{M}\sum_{i=0}^{M-1} (-1)^{M\!-j}C_M^j\nonumber\\
  &=-\frac{d^M}{M}\left[\sum_{i=0}^{M} (-1)^{M-j}C_M^j-1\right]\nonumber\\
  &=-\frac{d^M}{M}\left[(1-1)^M-1\right]=\frac{d^M}{M}.
  \end{align}
 Hence, $ P(\varepsilon_2)=1-(1-d)^M-d^M$. The optimal $d$ in this case is $1/2$,
  which implies $a_2=\frac{-1+\sqrt{1+\rho \ln2}}{\rho \ln2}<\frac{1}{2}$, and
  $ P(\varepsilon_2)=1-\frac{1}{2^{M-1}}$. Here $\ln(\cdot)$ denotes the natural logarithm.
  \begin{Remark}
    This special case shows that $ P(\varepsilon_2)\rightarrow 1$ when  $M$ is sufficiently large.
  This means that, almost for all the possible
 channel realizations, NOMA achieves larger individual rates than naive  TDMA
    for both user $m$ and $n$ as long as
    the difference between the better and worse channel gains is sufficiently large. This phenomenon is
    also valid for other pairing  cases (i.e., $(n,m)\neq (M,1)$) as verified via some numerical examples
     in the next section.
  \end{Remark}
  \section{Numerical Results}
  \begin{figure}[tbp]\centering
    \epsfig{file=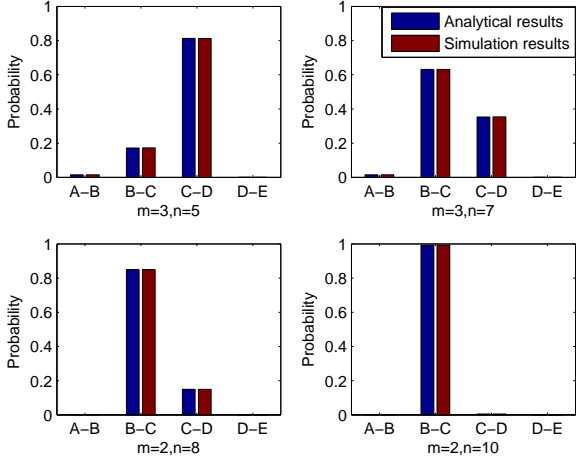, width=0.49\textwidth,clip=}
\caption{Probability that the point T lies on a certain segment in Fig. \ref{regions}, where $\rho=25$ (dB), $a_2=1/{\sqrt{\rho}}$.}
\label{Each_probability}
\end{figure}
In this section, the performance of NOMA is evaluated in comparison with     TDMA by
using computer simulations. The total number of users in the wireless downlink system is $M=10$, and different choices of $(m,n)$ will be considered.

In Fig. \ref{Each_probability}, the probability of each event defined in Section
\ref{comparison} is displayed via column diagrams. Specifically, the probabilities that
the point T lies on subsegments A-B, B-C, C-D and D-E in Fig. \ref{regions}
 are displayed, where we set $\rho=25$ dB and
  $a_2=1/{\sqrt{\rho}}$  for simplicity.
 Four different user pairs $(m,n)$ are considered, which shows that the probability  that the
  point T lies on subsegment B-C (i.e.,
 $P(\varepsilon_2)$)
 increases with the value of $(n-m)$, as discussed in Remark 2.
 In Fig. \ref{P_BC}, additional  numerical results  are provided to show $P(\varepsilon_2)$ as a function of  $n$.
  As shown in this figure,
 $P(\varepsilon_2)$ increases with the value of $(n-m)$,  i.e., NOMA is prone to perform better than TDMA in terms of
     individual rates when    the difference between the users' channels  is large.
  In addition, it is worth pointing out that the Monte Carlo  simulation results provided in Figs. \ref{Each_probability}
  and \ref{P_BC} match well with the analytical results developed in \eqref{P2},  \eqref{P1},  \eqref{P4} and \eqref{P3}.
   In Fig. \ref{Average_rate}, individual rates of NOMA and TDMA averaged over the fading channels
    are depicted as functions of SNR (i.e., $\rho$),
  where we set $(m,n)=(1,M)$, and $a_2={(\sqrt{\rho\ln(2)+1}-1)}/{{(\rho\ln(2))}}$ according to
  the special case in Section \ref{Probability_ana}.  As shown in this figure, NOMA has a constant performance gain
  over TDMA for each individual rate. When $\rho=55$ dB,
   the performance gains with respect to user $m$ and user $n$ are about
  1 bits per channel uses (BPCU) and 2 BPCU, respectively. This is due to the fact that
 $P(\varepsilon_2)\rightarrow 1$ in this case, i.e., NOMA outperforms TDMA in terms of each user's rate for
  almost all the possible
 realizations of $(h_m,h_n)$.
   \begin{figure}[tbp]\centering
    \epsfig{file=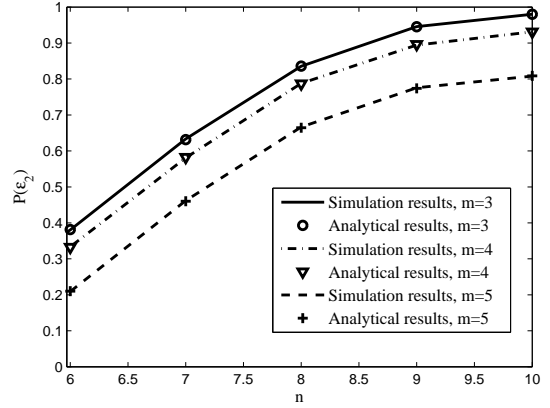, width=0.44\textwidth,clip=}
\caption{Probability of the event $\varepsilon_2$, where $\rho=25$ (dB), $a_2=1/{\sqrt{\rho}}$.}\label{P_BC}
\end{figure}
  \begin{figure}[tbp]\centering
    \epsfig{file=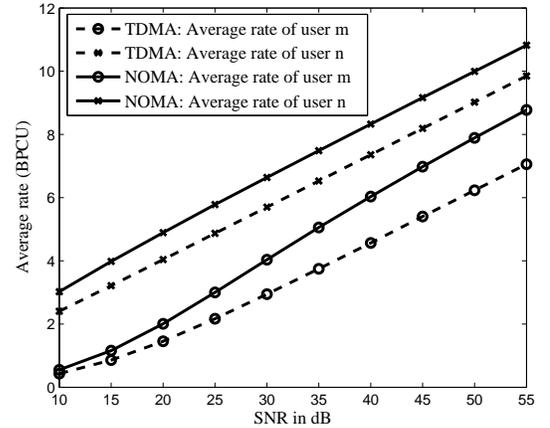, width=0.455\textwidth,clip=}
\caption{Average individual rates, $a_2=\frac{\sqrt{\rho\ln(2)+1}-1}{{\rho\ln(2)}}$, $m=1$, $n=10$.}\label{Average_rate}
\end{figure}
\vspace{-1em}
\section{Conclusions}
This letter has investigated the performance of NOMA in a downlink network
from an information theoretic  perspective. The relationship among  the  BC  capacity region,
the NOMA rate region and the TDMA rate region was first described.
According to their relationship, the performance of NOMA was   evaluated
 in terms of both the sum rate and
 users' individual rate, by
  considering   TDMA
  as the benchmark.
  Future work of interest is to dynamically change power allocation   according to instantaneous CSI
  for enlarging the ergodic  achievable rates \cite{li2001capacity}. Moreover, it is important to establish the connection between MIMO NOMA and information theoretic MIMO broadcasting concepts.
\section{Acknowledgment}
The authors thank Mr Yiran Xu for helpful discussions. 
\newpage
\bibliographystyle{IEEEtran}
\bibliography{references}
\end{document}